\def\subsection{\@startsection{subsection}{2}%
  \z@{.5\linespacing\@plus.7\linespacing}{.3\linespacing}%
  {\normalfont\bfseries}}
\DeclarePairedDelimiter{\parens}{\lparen}{\rparen}
\DeclarePairedDelimiter{\set}{\{}{\}}
\DeclarePairedDelimiter{\sqbracks}{[}{]}
\newcommand*{\RR}{\mathbb{R}}
\DeclareMathOperator{\Id}{Id}
\newcommand*{\comp}{\circ} %
\DeclareMathOperator{\supp}{supp} %
\DeclareMathOperator{\Cont}{\mathcal{C}}
\newcommand{\dd}{\mathrm{d}}
\NewDocumentCommand{\dv}{mm}{\frac{\dd #1}{\dd #2}}
\NewDocumentCommand{\pdv}{mm}{\frac{\partial #1}{\partial #2}}
\NewDocumentCommand{\orth}{om}{{#2}^{\bot\IfValueT{#1}{_#1}}}
\NewDocumentCommand{\orthL}{om}{\prescript{\bot\IfValueT{#1}{_#1}}{}{#2}}
\newcommand*{\lieBr}[1]{\sqbracks{#1}}
\newcommand*{\lieD}[1]{\mathcal{L}_{#1}}
\DeclareMathOperator{\Ad}{Ad}
\newcommand{\contr}[1]{\iota_{#1}}
\newcommand*{\Forms}{\Omega}
\newcommand*{\VecFields}{\mathfrak{X}}
\theoremstyle{plain}
\newtheorem{theorem}{Theorem}
\newtheorem*{theorem*}{Theorem}
\newtheorem{lemma}[theorem]{Lemma}
\newtheorem*{lemma*}{Lemma}
\newtheorem{proposition}[theorem]{Proposition}
\newtheorem*{proposition*}{Proposition}
\newtheorem*{cor*}{Corollary}
\theoremstyle{definition}
\newtheorem{definition}{Definition}
\newtheorem*{definition*}{Definition}
\newtheorem*{example*}{Example}
\crefname{theorem}{Theorem}{Theorems}
\Crefname{theorem}{Theorem}{Theorems}
\crefname{lemma}{Lemma}{Lemmas}
\Crefname{lemma}{Lemma}{Lemmas}
\crefname{proposition}{Proposition}{Propositions}
\Crefname{Prop}{Proposition}{Propositions}
\crefname{cor}{Corollary}{Corollaries}
\Crefname{cor}{Corollary}{Corollaries}
\crefname{definition}{Definition}{Definitions}
\Crefname{definition}{Definition}{Definitions}
\crefname{example}{Example}{Examples}
\Crefname{example}{Example}{Examples}
\crefname{theorem}{Theorem}{Theorems}
\newtheorem*{exercise*}{Exercise}
\crefname{exercise}{exercise}{exercises}
\Crefname{exercise}{Exercise}{Exercises}  
\theoremstyle{remark}
\newtheorem*{remarkx*}{Remark}
\crefname{remark}{Remark}{Remarks}
\Crefname{remark}{Remark}{Remarks}
\newenvironment{remark}
  {\pushQED{\qed}\remarkx}
  {\popQED\endremarkx}
\newenvironment{remark*}
  {\pushQED{\qed}\remarkx*}
  {\popQED\endremarkx*}
    \newcommand{\Reeb}{\mathcal{R}}
    \newcommand{\lsharp}{\sharp_\Lambda}
    \DeclarePairedDelimiter{\jacBr}{\lbrace}{\rbrace}
    \newcommand{\clift}[1]{{#1}^C}
    \newcommand{\vlift}[1]{{#1}^V}
    \newcommand{\rclift}[1]{\bar{{#1}}^C}
    \newcommand{\rvlift}[1]{\bar{{#1}}^V}
    \newcommand*{\rVecFields}{\bar{\mathfrak{X}}}
    \title{Infinitesimal symmetries in Contact Hamiltonian systems}
   \author[M. de León]{Manuel de León}
   \address{Manuel de Le\'on: Instituto de Ciencias Matem\'aticas (CSIC-UAM-UC3M-UCM),
   c\textbackslash Nicol\'as Cabrera, 13-15, Campus Cantoblanco, UAM
   28049 Madrid, Spain \newline
   and \newline
   Real Academia de Ciencias Exactas, Físicas y Naturales, c\textbackslash de Valverde,
   22, 28004 Madrid, Spain
   } \email{mdeleon@icmat.es}
   \author[M. Lainz Valcázar]{Manuel Lainz Valcázar}
   \address{Manuel Laínz Valcázar:
   Instituto de Ciencias Matem\'aticas (CSIC-UAM-UC3M-UCM),
   c$\backslash$ Nicol\'as Cabrera, 13-15, Campus Cantoblanco, UAM
   28049 Madrid, Spain} \email{manuel.lainz@icmat.es}
   \date{\today}
\begin{document}

\begin{abstract}
    In this paper, we extend the well-known Noether theorem for Lagrangian systems to contact Lagrangian systems. We introduce a classification of infinitesimal symmetries and obtain the corresponding dissipated quantities. We notice that in contact dynamics, the existence of infinitesimal symmetries does not produce conserved quantities, but functions that dissipate at the same rate than the energy; so, the corresponding quotients are true conserved quantities.
\end{abstract}
\maketitle

\section{Introduction}

Noether theorem is one of the most relevant results relating symmetries of a Lagrangian system and conserved quantities of the corresponding Euler-Lagrange equations. In the simplest view, the existence of a \emph{cyclic} coordinate implies the conservation of the corresponding momentum. Indeed, if $L=L(q^i, \dot{q}^i)$ does not depend on the coordinate $q^j$, then, using the Euler-Lagrange equation
\begin{equation}
    \dv{}{t} \parens*{\pdv{L}{\dot{q}^j}} - \pdv{L}{q^j} = 0,
\end{equation}
we deduce that (see~\cite{Arnold1997})
\begin{equation}
    \dot{p}_j =  \dv{}{t} \parens*{\pdv{L}{\dot{q}^j}} = 0.
\end{equation}

Noether theorem can be described on a geometric framework~\cite{Carinena1992,Carinena1991,Carinena1989,Carinena1988,Sarlet1987,Cantrijn1980,Sarlet1983,deLeon1994,deLeon1994a,Cicogna1992,Prince1983,Prince1985,Crampin1983,Carinena1989a,Aldaya1978,Aldaya1980}. In that framework, $L$ is a function on the tangent bundle $TQ$ of the configuration manifold $Q$ and $X$ be a vector field on $Q$. We denote by $\clift{X}$ and $\vlift{X}$ the complete and vertical lifts of $X$ to the tangent bundle $TQ$. Then (see~\cite{deLeon2011}):

\begin{theorem}[Noether]\label{thm:noether_intro}
    $\clift{X}(L) = 0$ if and only if $\vlift{X}(L)$ is a conserved quantity.
\end{theorem}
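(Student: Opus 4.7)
The plan is to prove both directions at once by deriving the formula $\xi_L(\vlift{X}(L)) = \clift{X}(L)$, where $\xi_L$ is the Euler--Lagrange (SODE) vector field associated with $L$. The proof uses the standard symplectic machinery on $TQ$: the vertical endomorphism $S$, the Liouville field $\Delta$, the Cartan one-form $\theta_L := dL\circ S$, the Cartan two-form $\omega_L := -d\theta_L$, the energy $E_L := \Delta(L)-L$, and the dynamics $\iota_{\xi_L}\omega_L = dE_L$ (recalling that $S(\xi_L) = \Delta$ since $\xi_L$ is a SODE).

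First I would recall the three structural identities for lifts that do all the work: $S(\clift{X}) = \vlift{X}$, $[\clift{X},\Delta] = 0$, and $\lieD{\clift{X}} S = 0$. The third of these, combined with the definition of $\theta_L$, yields at once
\[
    \lieD{\clift{X}}\theta_L \;=\; d(\clift{X}(L))\circ S,
\]
and in particular $\iota_{\clift{X}}\theta_L = dL(S(\clift{X})) = \vlift{X}(L)$. Using Cartan's magic formula and $\omega_L = -d\theta_L$, this rewrites as
\[
    d(\vlift{X}(L)) \;=\; \lieD{\clift{X}}\theta_L + \iota_{\clift{X}}\omega_L.
\]

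Next I would evaluate both sides on $\xi_L$. The second term gives $\omega_L(\clift{X},\xi_L) = -dE_L(\clift{X}) = -\clift{X}(E_L)$, and since $[\clift{X},\Delta]=0$ we have $\clift{X}(E_L) = \Delta(\clift{X}(L)) - \clift{X}(L)$. For the first term, the SODE property $S(\xi_L)=\Delta$ gives $(\lieD{\clift{X}}\theta_L)(\xi_L) = d(\clift{X}(L))(\Delta) = \Delta(\clift{X}(L))$. Adding these two contributions the $\Delta(\clift{X}(L))$ terms cancel, leaving the clean identity
\[
    \xi_L(\vlift{X}(L)) \;=\; \clift{X}(L).
\]
From this both implications of the theorem are immediate: $\vlift{X}(L)$ is preserved along the Euler--Lagrange flow precisely when $\clift{X}(L)=0$.

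I do not expect any serious obstacle; the whole argument is a bookkeeping exercise in the Cartan calculus on $TQ$, and the only step that requires a little care is remembering to use $S(\xi_L)=\Delta$ at the right moment so that the two $\Delta(\clift{X}(L))$ contributions cancel cleanly. The three lift identities $S\clift{X}=\vlift{X}$, $[\clift{X},\Delta]=0$, $\lieD{\clift{X}}S=0$ will be taken as background facts from the geometry of the tangent bundle and not reproved here, since they are precisely what makes the complete and vertical lifts the natural objects for this formulation.
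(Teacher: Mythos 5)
Your derivation of the identity $\xi_L(\vlift{X}(L)) = \clift{X}(L)$ is correct, and it relies on exactly the machinery the paper sets up around this statement ($\alpha_L = S^*(\dd L)$, $\omega_L = -\dd\alpha_L$, $\contr{\xi_L}\omega_L = \dd E_L$, the SODE property $S(\xi_L)=\Delta$, and the lift identities $S(\clift{X})=\vlift{X}$, $[\clift{X},\Delta]=0$, $\lieD{\clift{X}}S=0$). The paper itself only cites \cite{deLeon2011} for this classical theorem, but your computation is the standard argument and is the precise symplectic counterpart of the paper's own proof of the contact version (\cref{thm:base_symmetry} via \cref{Qcomputations} and \cref{prop:jacBr_alternative}), where the same lift identities and the SODE property yield the analogous cancellation.
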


Here we are using the symplectic formulation of Lagrangian mechanics. Hence, $L$ defines a symplectic form
\begin{equation}
    \omega_L = - \dd \alpha_L, \quad \alpha_L = S^* (\dd L)
\end{equation}
on $TQ$, where $L$ is assumed to be regular, $S$ is the canonical vertical endomorphism on $TQ$, $S^*$ is the adjoint operator, and the dynamics is obtained by the equation
\begin{equation}
    \contr{\xi_L} \omega_L = \dd E_L,
\end{equation}
where $E_L = \Delta(L) - L$ is the associated energy and $\Delta$ is the canonical Liouville or Euler vector field on $TQ$. 
Indeed the projection to $Q$ of the integral curves of the second order differential equation $\xi_L$ are just the solutions of the Euler-Lagrange equations.

This approach has permitted a deep investigation on other possible infinitesimal symmetries, relating them with the corresponding conserved quantities. A first distinction with the Hamiltonian framework is that we can consider point-base symmetries and symmetries on the phase space of velocities.

The literature about this subject is indeed very extensive. See for example~\textcite{Cantrijn1980}, \textcite{Sarlet1983}, \textcite{Prince1983,Prince1985}, \textcite{Crampin1983}, \textcite{Marmo1986}, \textcite{Cicogna1992},~\textcite{Aldaya1978,Aldaya1980}, even with more general symmetries, \textcite{Sarlet1987}, or, for the time dependent case, Cariñena \emph{et al.}~\cite{Carinena1991,Carinena1989,Carinena1989a,Carinena1992}, or singular Lagrangian systems~\cite{Carinena1988}, and for higher order Lagrangian systems in~\textcite{deLeon1994a,deLeon1994,deLeon1995}.

The aim of the present paper is to extend this theory to the case of contact Lagrangian systems defined as follows. Let $L: TQ\times \RR \to \RR$ be a Lagrangian depending on an extra parameter $z \in \RR$, then the dynamics is obtained through the contact Hamiltonian system given by the Hamiltonian function $E_L = \Delta(L)-L$ and the contact form
\begin{equation}
    \eta_L = \dd z - \alpha_L,
\end{equation}
where $\alpha_L$ is the pullback to $TQ\times \RR$ of the form defined on $TQ$ for the symplectic case.

The first question is the following: we know that the contact Euler-Lagrange equations, also known as the Herglotz equations~\cite{Herglotz1930,deLeon2019},
\begin{equation}\label{eq:Herglotz_intro}
    \dv{}{t} \parens*{\pdv{L}{\dot{q}^i}} - \pdv{L}{q^i} = \pdv{L}{\dot{q}^i} \pdv{L}{z}
\end{equation}
can be obtained by a variational principle, called the Herglotz variational principle. The main difference with the usual Hamilton principle is that the action is defined by a non-autonomous ODE, instead of an integral.

In~\cite{Georgieva2003,Georgieva2011} a Noether theorem was proven for these Lagrangian systems. However, that proof is written in terms of calculus of variations. In addition, the proofs are not easy to follow.

The main result in the present paper is to provide a geometric framework for this theory in a similar vein to the symplectic case. This gives a geometric interpretation of the contact Noether theorem analogous to~\cref{thm:noether_intro}.

After this first result, we have proceeded to extend it to more general types of infinitesimal symmetries and computing the corresponding quantities.

A relevant comment here is that in contact Lagrangian systems we do not obtain conserved quantities, but quantities that dissipate at the same rate as the energy of the system $E_L$.  Those quantities will be called \emph{dissipated quantities}.
. This can already be seen in the case of a cyclic coordinate: if $L$ does not depend on $q^j$, then, by \cref{eq:Herglotz_intro},
\begin{align}
    \dot{p}_j &=\pdv{L}{z} p_j,
\intertext{or, along an integral curve,}
    p_j(t) &= p(0)  \int_0^t \pdv{L}{z} (t) \dd t .
\end{align}

As we know, we have
\begin{align}
    \dot{E_L} &=\pdv{L}{z} E_L,\\
\intertext{and then,}
    E_L(t) &= E_L(0)  \int_0^t \pdv{L}{z} (t) \dd t.
\end{align}
Note that, assuming that $E_L$ is nonzero, then $p_j / E_L$ is a conserved quantity.

We note that in~\cite{Gaset2019}, the authors describe the concept of infinitesimal symmetries and dissipated quantities on the Hamiltonian framework. Their results and definitions are particular cases of ours.

The paper is structured as follows. In Section~2 we study the relationship of infinitesimal symmetries and dissipated quantities on the general context of contact Hamiltonian systems. In Section~3, we study the specific case of contact Lagrangian systems, considering infinitesimal symmetries in increasing order of generality.
Indeed, we introduce infinitesimal symmetries of the Lagrangian, generalized infinitesimal symmetries of the Lagrangian, Noether symmetries and Lie symmetries.
A relevant point is that we can consider symmetries based on $Q$ or on $Q\times \RR$. The corresponding dissipated quantities are also obtained.
In Sections~4 and~5 we consider the case of a Lie group of symmetries and the corresponding momentum map.

\section{Symmetries and conserved quantities in contact Hamiltonian systems}
We recall some results in contact geometry. The detailed proofs can be found in~\cite{deLeon2018}.
 Let $(M,\eta)$ be a contact manifold. This means that $M$ is a $(2n+1)$-dimensional manifold and $\eta \wedge (\dd\eta)^n$ is a volume form.
Then, there exist a unique vector field $\Reeb$ (the so-called Reeb vector field) such that
\begin{equation}
	\contr{\Reeb}  \dd \eta = 0, \quad  \contr{\Reeb}\, \eta = 1.
\end{equation}

There is a Darboux theorem for contact manifolds so that around each point in $M$ one can find local coordinates 
(called Darboux coordinates) $(q^i, p_i, z)$ 
such that
\begin{equation}
	 \eta = \dd z - p_i \, \dd q^i.
\end{equation}

In Darboux coordinates we have
\begin{equation}
	\Reeb = \frac{\partial}{\partial z}.
\end{equation}

We define now the vector bundle isomorphism
\begin{equation}\label{eq:flat_iso}
    \begin{aligned}
        {\flat} : TM &\to T^* M ,\\
         v &\mapsto \contr{v}  \dd \eta + \eta (v)  \eta.
    \end{aligned}
\end{equation}
We denote also by $\flat:\VecFields(M)\to \Forms^1(M)$ the associated isomorphism of $\Cont^\infty(M)$-modules. Notice that ${\flat}(\Reeb)=\eta$.

For a Hamiltonian function $H$ on $M$ we define the Hamiltonian vector field $X_H$ by
\begin{equation}\label{hamiltonian_vf_contact}
    {\flat} (X_H) = \dd H - (\Reeb (H) + H) \, \eta.
\end{equation}.

We call the triple $(M,\eta,H)$ a \emph{contact Hamiltonian system}. From \cref{hamiltonian_vf_contact} one can easily deduce that
\begin{subequations}\label{eqs:ham_vfs_characterization}
    \begin{align}
        \eta(X_H) &= -H, \\
        \lieD{X_H} \eta &= -\Reeb(H) \eta.\label{eq:ham_vf_conf_contactomorphism},
    \end{align}
\end{subequations}
which, can be proved equivalent to~\cref{hamiltonian_vf_contact}.

These two additional identities will be useful in what follows:
\begin{subequations}
    \begin{align}
        X_H(H) &= -\Reeb(H) H,\\
        \contr{X_H} \dd \eta &= \dd H - \Reeb(H) \eta.
    \end{align}
\end{subequations}

We say that a vector field $X\in \VecFields(M)$ is an \emph{infinitesimal conformal contactomorphism} for $(M,\eta)$ if $\lieD{X}\eta = a_X \eta$ for some function $a_X \in \Cont^\infty (M)$. When $a_X=0$, we say that $X$ is an \emph{infinitesimal contactomorphism}. Equivalently $X$ is a conformal contactomorphism if and only if its flow preserves the contact form $\eta$. $X$ is an {infinitesimal conformal contactomorphism} if and only if its flow preserves the contact distribution $\ker \eta$.

Note that, by~\cref{eq:ham_vf_conf_contactomorphism}, a Hamiltonian vector field is an infinitesimal conformal contactomorphism. Conversely, if $X$ is an infinitesimal conformal contactomorphism, then $X$ is the Hamiltonian vector field of $f = -\eta(X)$, and, moreover, $a_X = - \Reeb(f)$. Hence, in contact geometry Hamiltonian vector fields coincide with the infinitesimal conformal contactomorphism. Moreover $X_f$ is a conformal contactomorphism if and only if $\Reeb(f)=0$.

A contact manifold is an example of a Jacobi manifold~\cite{Lichnerowicz1978,Kirillov1976}, whose definition we recall below.

\subsection{Jacobi manifolds}
\begin{definition}\label{def:jacobi_mfd}
    A \emph{Jacobi manifold} is a triple $(M,\Lambda,E)$, where $\Lambda$ is a bivector field (a skew-symmetric contravariant 2-tensor field) and $E \in \VecFields (M)$ is a vector field, so that the following identities are satisfied:
    \begin{align}
        \lieBr{\Lambda,\Lambda} &= 2 E \wedge \Lambda \\
        \lieD{E} \Lambda &= \lieBr{E,\Lambda} = 0,
    \end{align}
    where $\lieBr{\cdot,\cdot}$ is the Schouten–Nijenhuis bracket~\cite{Lichnerowicz1978,Libermann1987,Vaisman1994}.
\end{definition}

The bivector $\Lambda$ induces a morphism of vector bundles
\begin{equation}
    \begin{aligned}
        \lsharp: T^*M &\to TM,\\
        \alpha &\mapsto \Lambda(\alpha,\cdot ).
    \end{aligned}
\end{equation}
We also denote by $\lsharp: \Forms^1(M) \to \VecFields(M)$ to the corresponding morphism of $\Cont^\infty(M)$-modules.

The Jacobi bracket associated to the Jacobi structure $(M,\Lambda,E)$ is given by
\begin{equation}
    \jacBr{f,g} = \Lambda(\dd f, \dd g) + f E(g) - g E(f).
\end{equation}

This bracket is bilinear, antisymmetric and satisfies the Jacobi identity, but it fails to satisfy the Leibniz rule; instead it satisfies the this weak version
\begin{equation}
    \supp(\{f,g\}) \subseteq \supp (f) \cap \supp (g).
\end{equation}

So, $(\Cont^\infty(M),\jacBr{\cdot,\cdot})$ is a local Lie algebra in the sense of Kirilov~\cite{Kirillov1976,Lichnerowicz1978}.

If $E=0$, then $(M,\Lambda)$ a Poisson manifold, and the bracket satisfies the Leibniz rule. This is the case of symplectic manifolds $(M,\omega)$, where $\Lambda$ is the contravariant inverse of the symplectic formula $\omega$. 

On a Jacobi manifold $(M,\Lambda, E)$ the Hamiltonian vector  $X_f$ associated to a function $f$ is given by
\begin{equation}
    X_f = \lsharp(\dd f) + f E,
\end{equation}
where $\lsharp(\alpha)(\beta) = \Lambda(\alpha,\beta)$, for arbitrary $1$-forms $\alpha,\beta$.

\subsection{The Jacobi structure of a contact manifold}

A contact manifold $(M,\eta)$ has a natural Jacobi structure $(M,\Lambda, E)$, where
\begin{equation}\label{eq:contact_jacobi}
    \Lambda(\alpha,\beta) = 
    -\dd \eta ({\flat}^{-1} (\alpha), {\flat}^{-1}(\beta)), \quad
    E = - \Reeb.
\end{equation}

A simple computation shows that
\begin{equation}
    \sharp(\alpha) = \lsharp(\alpha) - \alpha(\Reeb)\Reeb 
\end{equation}
for any $\alpha \in \Forms^1(M)$, since $\Reeb$ generates the kernel of $\dd \eta$ and, by duality, $\eta$ generates the kernel of $\Lambda$.

Given a function $f\in \Cont^\infty(M)$, one has that the Hamiltonian vector field $X_f$ is just
\begin{equation}
    X_f = \lsharp(\dd f) - f \Reeb.
\end{equation}

Furthermore, the Jacobi bracket of two functions $f,g$ on $(M,\eta)$ is given by
\begin{equation}
    \jacBr{f,g} = \Lambda(\dd f, \dd g) - f\Reeb(g) + g \Reeb(f).
\end{equation}

\begin{lemma}\label{lem:jacbr_1}
    For all $f,g\in \Cont^\infty(M)$ we have that
    \begin{equation}
        \jacBr{f,g} = X_f(g) + g \Reeb(f)  = -X_g(f) - f \Reeb(g)
    \end{equation}
\end{lemma}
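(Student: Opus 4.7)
The plan is to carry out a direct substitution using the explicit formulas already derived for the Jacobi bracket and the Hamiltonian vector field on a contact manifold. Recall from the text that
\[
    X_f = \lsharp(\dd f) - f \Reeb, \qquad \jacBr{f,g} = \Lambda(\dd f, \dd g) - f\Reeb(g) + g \Reeb(f),
\]
and that, by definition of $\lsharp$, one has $\lsharp(\alpha)(\beta) = \Lambda(\alpha,\beta)$ for $1$-forms $\alpha,\beta$, i.e.\ $\beta(\lsharp(\alpha)) = \Lambda(\alpha,\beta)$.

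For the first equality, I would simply apply $X_f$, as a vector field, to the function $g$:
\[
    X_f(g) = \lsharp(\dd f)(g) - f \Reeb(g) = \dd g\bigl(\lsharp(\dd f)\bigr) - f \Reeb(g) = \Lambda(\dd f, \dd g) - f\Reeb(g),
\]
where in the last step I used the identification of $\lsharp(\dd f)(g)$ with the pairing of $\lsharp(\dd f)$ against $\dd g$. Adding $g\Reeb(f)$ to both sides gives exactly the formula for $\jacBr{f,g}$, establishing the first identity.

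For the second equality, I would invoke the antisymmetry of the Jacobi bracket (which is manifest from its definition, since $\Lambda$ is skew and the remaining terms $-f\Reeb(g)+g\Reeb(f)$ change sign when $f$ and $g$ are swapped). Writing $\jacBr{f,g} = -\jacBr{g,f}$ and applying the already-proven first equality to $\jacBr{g,f}$ yields
\[
    \jacBr{f,g} = -\bigl(X_g(f) + f\Reeb(g)\bigr) = -X_g(f) - f\Reeb(g).
\]

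There is no real obstacle: every piece of notation has been defined earlier in the excerpt, and the argument is a two-line computation followed by the antisymmetry trick. The only thing to be careful about is keeping the sign conventions straight, in particular the sign $E = -\Reeb$ in the passage from the general Jacobi-manifold formulas to the contact case, which has already been absorbed into the displayed formulas for $X_f$ and $\jacBr{f,g}$ that I am starting from.
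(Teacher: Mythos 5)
Your proposal is correct and follows essentially the same route as the paper: compute $X_f(g)=\dd g(\lsharp(\dd f))-f\Reeb(g)=\Lambda(\dd f,\dd g)-f\Reeb(g)$ from the contact formula for the Hamiltonian vector field, compare with the displayed expression for $\jacBr{f,g}$, and obtain the second identity from antisymmetry. No gaps.
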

\begin{proof}
    Indeed,
    \begin{align*}
        \jacBr{f,g} &=
         \Lambda(\dd f, \dd g) - f\Reeb(g) + g \Reeb(f)\\ &=
         X_f(g) + f \Reeb(g) - f \Reeb(g) + g \Reeb(f) \\ &=
         X_f(g) + g \Reeb(f),
    \end{align*}
    since
    \begin{equation*}
        X_f(g) = \dd g (\lsharp(\dd f)) - f \Reeb(g)
        = \Lambda(\dd f, \dd g) - f \Reeb(g).
    \end{equation*}

    The second equality follows from the antisymmetry of the bracket.
\end{proof}

\begin{lemma}\label{eq:jacBr_lieBr}
    For all $f,g\in \Cont^\infty(M)$ we have that
    \begin{equation}
        \jacBr{f,g} = -\eta(\lieBr{X_f,X_g})\,
    \end{equation}
\end{lemma}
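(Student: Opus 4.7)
The plan is to compute $\eta([X_f, X_g])$ directly using the Lie-derivative identity
\[
    X_f\bigl(\eta(X_g)\bigr) = (\lieD{X_f}\eta)(X_g) + \eta(\lieBr{X_f,X_g}),
\]
which is the standard commutator formula $\lieD{X}(\alpha(Y)) = (\lieD{X}\alpha)(Y) + \alpha(\lieBr{X,Y})$ specialized to $X = X_f$, $Y = X_g$, $\alpha = \eta$. This reduces the statement to things already recorded in the excerpt.

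The key step is to substitute the two characterizations of Hamiltonian vector fields in \cref{eqs:ham_vfs_characterization}. First, $\eta(X_g) = -g$ gives
\[
    X_f\bigl(\eta(X_g)\bigr) = -X_f(g).
\]
Second, $\lieD{X_f}\eta = -\Reeb(f)\,\eta$ combined with $\eta(X_g) = -g$ yields
\[
    (\lieD{X_f}\eta)(X_g) = -\Reeb(f)\,\eta(X_g) = g\,\Reeb(f).
\]
Plugging both into the commutator formula and solving for $\eta(\lieBr{X_f,X_g})$ gives
\[
    \eta(\lieBr{X_f,X_g}) = -X_f(g) - g\,\Reeb(f).
\]

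Finally I would invoke \cref{lem:jacbr_1}, which states that $\jacBr{f,g} = X_f(g) + g\,\Reeb(f)$, to conclude $\eta(\lieBr{X_f,X_g}) = -\jacBr{f,g}$, as required. There is no real obstacle here: the proof is a one-line manipulation once the right Cartan-type identity is chosen. The only place one needs to be a little careful is keeping track of the sign conventions, since $\eta(X_H) = -H$ rather than $+H$, and the Reeb-twisted term from $\lieD{X_f}\eta = -\Reeb(f)\,\eta$ is precisely what reproduces the $g\,\Reeb(f)$ correction appearing in the Jacobi bracket formula of \cref{lem:jacbr_1}.
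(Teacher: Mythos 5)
Your proof is correct and is essentially the paper's own argument: the same Lie-derivative (Cartan-type) identity applied to $\eta$ with $X_f$, $X_g$, the same substitutions $\eta(X_g)=-g$ and $\lieD{X_f}\eta=-\Reeb(f)\,\eta$ from \cref{eqs:ham_vfs_characterization}, and the same final appeal to \cref{lem:jacbr_1}. The sign bookkeeping in your computation matches the paper's.
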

\begin{proof}
    Using Cartan's formula
    \begin{align*}
        -\eta(\lieBr{X_f,X_g}) &=
        - X_f(\eta(X_g)) + (\lieD{X_f} \eta)(X_g),
        \intertext{by using~\eqref{eqs:ham_vfs_characterization},} &=
        - X_f(-g) + (- \Reeb(f) \eta)(X_g), 
        \intertext{using again~\eqref{eqs:ham_vfs_characterization},} &= 
        X_f(g) + g\Reeb(f) = \jacBr{f,g},
    \end{align*}
    by \cref{lem:jacbr_1}.
\end{proof}

Let us observe that, contrary to the case of the Poisson bracket, in our context if we have two functions in involution, say $ \jacBr{f,g} = 0$, this does not imply that $g$ is a constant of motion for $X_f$. 
However, if $\jacBr{f,g}=0$, then we get
\begin{equation}
    \jacBr{f,g} = X_f(g) + g \Reeb(f) =0,
\end{equation}
which implies
\begin{equation}
    X_f(g) = -\Reeb(f) g.
\end{equation}

Therefore, since for a given Hamiltonian we have $X_H(H) = - \Reeb(H)H$, if $f$ commutes with $H$ we obtain
\begin{equation}
    X_H(f) = - \Reeb(H) f.
\end{equation}

Because of this, make the following definition
\begin{definition}
    In a Hamiltonian system $(M, \eta, H)$, we say that a function $f$ is \emph{dissipated} if $\jacBr{H,f} = 0$. Equivalently, $f$ dissipates at the same rate as the Hamiltonian.
\end{definition}

We note that the set of dissipated functions is a Lie subalgebra of $(\Cont^\infty(M),\jacBr{\, \cdot \, , \, \cdot \,})$. Indeed, $\RR$-linear combinations of dissipated functions are dissipated, and, because of the Jacobi identity, the Jacobi bracket of two dissipated functions is dissipated. Moreover, it is an algebra over the set of conserved quantities; this is, if $f$ is a dissipated quantity and $g$ is a conserved quantity, then $fg$ is dissipated:
\begin{equation}
    X_H (f g) = g X_H(f) = - \Reeb(H) f g.
\end{equation}

If we assume that $H$ has no zeros, we can relate dissipated functions to conserved functions. Assume that $f$ is dissipated, then $f/H$ is a conserved quantity. Indeed:
\begin{equation}
    X_H \parens*{\frac{f}{H}} = \frac{X_H(f) H - f X_H(H) }{H^2} = 
    \frac{-\Reeb(H) f H  + \Reeb(H) f H}{H^2} = 0.
\end{equation}

In general, if $f_1,f_2$ commutes with $H$, then $f_1/f_2$ is a conserved quantity, assuming $f_2$ has no zeros.

The conclusion is that in order to obtain conserved quantities one should find quantities that dissipate at the same rate as the Hamiltonian.

\begin{remark}
  In the particular case where $\Reeb(H)=0$, then the dissipated quantities are precisely the conserved quantities. That is, $\jacBr{H,f}=0$ if and only if $X_H(f)=0$. 
\end{remark}

In the case that $H$ has no zeros there is a correspondence between sets of $m$ independent conserved quantities and sets of $m$ dissipated quantities by taking the quotients. Explicitly, if $f_1,\ldots 
f_m$ commute with $H$, then
\begin{equation}
    g_i = \frac{f_i}{H}
\end{equation}
are conserved quantities. Conversely, if $g_1,\ldots g_m$ are conserved quantities, then
\begin{equation}
    f_i = {g_i}{H}
\end{equation} 
are dissipated quantities.

\subsection{Infinitesimal symmetries for a contact Hamiltonian system}
Next, we will introduce a class of infinitesimal symmetries for a contact Hamiltonian system $(M,\eta,H)$ which will be very useful on the next section. First we prove the following result, which help us to compute Jacobi brackets.
\begin{proposition}\label{prop:jacBr_alternative}
    Let $X$ be a vector field such that $\eta(X) = - f$, then
\begin{equation}
    \jacBr{H,f} = 
    -\eta{(\lieBr{X_H,X})} = (\lieD{X} \eta) (X_H) + X(H).
\end{equation}
\end{proposition}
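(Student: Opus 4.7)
My plan is to establish the two equalities separately, each time by invoking the identity $\lieD{Y}(\eta(Z)) = (\lieD{Y}\eta)(Z) + \eta(\lieBr{Y,Z})$ that already powers the proof of \cref{eq:jacBr_lieBr}.

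For the first equality $\jacBr{H,f} = -\eta(\lieBr{X_H, X})$, I would specialise the identity to $Y = X_H$, $Z = X$ and solve for $\eta(\lieBr{X_H, X})$. Substituting the hypothesis $\eta(X) = -f$ on the left-hand side $X_H(\eta(X))$, and the Hamiltonian identity $\lieD{X_H}\eta = -\Reeb(H)\eta$ from \cref{eq:ham_vf_conf_contactomorphism} on the right-hand side $(\lieD{X_H}\eta)(X)$, I obtain $\eta(\lieBr{X_H,X}) = -X_H(f) - \Reeb(H)f$. By \cref{lem:jacbr_1} this is precisely $-\jacBr{H,f}$. The key observation is that the derivation never uses that $X$ is itself a Hamiltonian vector field; only the value $\eta(X) = -f$ enters. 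In this sense the statement genuinely generalises \cref{eq:jacBr_lieBr}, which corresponds to the special choice $X = X_f$.

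For the second equality, I would run the same Cartan-type identity with the roles of $X$ and $X_H$ exchanged, namely $Y = X$, $Z = X_H$, giving $X(\eta(X_H)) = (\lieD{X}\eta)(X_H) + \eta(\lieBr{X, X_H})$. Using $\eta(X_H) = -H$ from \cref{eqs:ham_vfs_characterization} together with the antisymmetry $\lieBr{X, X_H} = -\lieBr{X_H, X}$ rearranges this into the desired expression for $-\eta(\lieBr{X_H, X})$.

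I do not anticipate any real obstacle: both parts reduce to a single application of the same identity followed by substitution of the characterising properties of $X_H$ and the hypothesis on $\eta(X)$. The only point that demands care is the sign bookkeeping coming from $\eta(X) = -f$, $\eta(X_H) = -H$, $\lieD{X_H}\eta = -\Reeb(H)\eta$, and the antisymmetry of the Lie bracket, none of which is conceptually deep.
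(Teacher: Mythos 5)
Your handling of the first equality is correct, and it is in fact a more direct route than the paper's: the paper starts from $\jacBr{H,f}=-\eta(\lieBr{X_H,X_f})$ (\cref{eq:jacBr_lieBr}) and then trades $X_f$ for $X$ by observing that $X-X_f\in\ker\eta$ while $\lieD{X_H}\eta$ is proportional to $\eta$, whereas you apply the Leibniz identity directly to the pair $(X_H,X)$ and close with \cref{lem:jacbr_1}, never invoking $X_f$ at all. Both arguments are sound; yours makes it more transparent that only the value $\eta(X)=-f$ enters, which is the whole point of the proposition.

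The second equality is where your proposal does not close. Carrying out the rearrangement you describe, with $Y=X$, $Z=X_H$, one gets $(\lieD{X}\eta)(X_H)=X(\eta(X_H))-\eta(\lieBr{X,X_H})=-X(H)+\eta(\lieBr{X_H,X})$, hence $(\lieD{X}\eta)(X_H)+X(H)=+\eta(\lieBr{X_H,X})$, the \emph{negative} of the middle term; no sign bookkeeping can turn this into $-\eta(\lieBr{X_H,X})$, since the identity just derived holds for every $X$ using only $\eta(X_H)=-H$. So what your argument actually proves, combined with your (correct) first step, is $\jacBr{H,f}=-\eta(\lieBr{X_H,X})=-\bigl[(\lieD{X}\eta)(X_H)+X(H)\bigr]$. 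In fairness, the paper's own final step commits the same slip, and the second equality as printed is off by a sign: on $\RR^3$ with $\eta=\dd z-p\,\dd q$, $H=p$ (so $X_H=\partial/\partial q$, $\Reeb(H)=0$) and $X=q\,\partial/\partial z$ (so $f=-\eta(X)=-q$), one finds $\jacBr{H,f}=-\eta(\lieBr{X_H,X})=-1$ while $(\lieD{X}\eta)(X_H)+X(H)=\dd q(\partial/\partial q)+0=+1$. Equivalently, the printed right-hand side computes $\jacBr{H,f}$ under the convention $\eta(X)=f$ rather than $\eta(X)=-f$; the later applications (\cref{thm:Cartan}, \cref{thm:base_symmetry}) are unaffected because they only use the vanishing of $\jacBr{H,f}$. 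You should state the corrected identity (or flag the sign) explicitly rather than asserting that the rearrangement ``gives the desired expression,'' since as written that step would fail.
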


\begin{proof}
    If $\eta(X) = - f$, then $\eta(X - X_f) = 0$, so that $X-X_f$ is in the kernel of $\eta$.
    
    Since
    \begin{equation*}
        \lieD{X} \eta = - {\Reeb}(H) \eta,
    \end{equation*}
    we deduce that
    \begin{equation*}
        (\lieD{X} \, \eta)(X_f) = (\lieD{X} \, \eta)(X).
    \end{equation*}

    Therefore, using~\cref{eq:jacBr_lieBr} and Cartan's formula twice, one finds
\begin{align*}
\{H, f\} &=  - \eta ([X_H, X_f ] )\\
& =  (\lieD{X_H} \, \eta) (X_f) - X_H (\eta(X_f))\\
&= (\lieD{X_H} \eta)(X) - X_H(\eta(X)) \\
&= - \eta ([X_H, X]).
\end{align*}
From the second equality, we have
\begin{align*}
 - \eta ([X_H, X] )  &=  (\lieD{X} \, \eta) (X_H) - X (\eta(X_H))\\
&= (\lieD{X} \eta)(X_H) + X(H),
\end{align*}
applying again Cartan's formula.
\end{proof}

The above Proposition suggests us to introduce the following definition.

\begin{definition}
    A \emph{dynamical symmetry} on a contact Hamiltonian system $(M,\eta,H)$ is a vector field $X$ such that $\eta(\lieBr{X_H,X})=0$.
\end{definition}

Using \cref{prop:jacBr_alternative}, we deduce the following.
\begin{theorem}\label{thm:dynamical_symmetry}
    Let $X$ be a vector field on $M$  Then, $X$ is a dynamical symmetry of $(M,\eta,H)$ if and only if $\eta(X)$ is a dissipated quantity. 
\end{theorem}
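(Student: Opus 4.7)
The plan is to obtain the theorem as an essentially immediate consequence of \cref{prop:jacBr_alternative}. Given any vector field $X$ on $M$, set $f = -\eta(X)$, so that the hypothesis of the proposition is satisfied by construction. Then the proposition yields the identity
\begin{equation*}
    \jacBr{H,f} = -\eta(\lieBr{X_H,X}).
\end{equation*}
This single equation already contains both directions of the equivalence, since the left-hand side vanishes precisely when $f = -\eta(X)$ is a dissipated quantity (unwinding the definition of dissipated), and the right-hand side vanishes precisely when $X$ is a dynamical symmetry.

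The only minor subtlety is to observe that $\eta(X)$ being dissipated and $-\eta(X)$ being dissipated are equivalent conditions: by bilinearity of the Jacobi bracket, $\jacBr{H, -\eta(X)} = -\jacBr{H,\eta(X)}$, so one vanishes iff the other does. Hence saying ``$\eta(X)$ is dissipated'' is interchangeable with ``$-\eta(X)$ is dissipated''.

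I do not anticipate any real obstacle here; the work has already been done in \cref{prop:jacBr_alternative}, and the theorem is essentially a repackaging of that identity together with the definitions of \emph{dynamical symmetry} and \emph{dissipated quantity}. The proof will therefore be very short: state $f = -\eta(X)$, cite the proposition to get the key identity, and read off both implications.
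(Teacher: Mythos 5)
Your proposal is correct and coincides with the paper's own argument: the theorem is stated there as an immediate consequence of \cref{prop:jacBr_alternative}, applied with $f=-\eta(X)$, exactly as you do. Your remark that the sign of $\eta(X)$ is irrelevant (by $\RR$-bilinearity of the Jacobi bracket) is a fine, if implicit, point of the paper's deduction.
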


\begin{remark}
    The natural correspondence between dynamical symmetries and dissipated  quantities $f$ is not one-to-one. Indeed, given a dynamical symmetry $X$ such that $-\eta(X)=f$, the set of vector fields $\mathcal{F}=\set{X_f + Y \mid Y \in \ker \eta}$ are the dynamical symmetries corresponding to the quantity $f$. As one easily sees from~\cref{eqs:ham_vfs_characterization}, $X_f$ is the only one which is a Hamiltonian vector field.
\end{remark}

There is another concept of symmetry on this setting: \emph{Cartan symmetries}.

\begin{definition}
    We say that $X \in \VecFields(M)$ is a \emph{Cartan symmetry} for $(M,\eta,H)$ if $\lieD{X} \eta =  a \eta + \dd g$ for some functions $a,g\in \Cont^{\infty}(M)$ and $X(H) =  a H + g \Reeb(H)$. 
\end{definition}
\begin{theorem}\label{thm:Cartan}
    Let $X$ be a Cartan symmetry such that $\lieD{X}\eta  = \dd g + a \eta$. Then $f=\eta(X) - g$ is a dissipated quantity. 
\end{theorem}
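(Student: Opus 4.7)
The natural tool is \cref{prop:jacBr_alternative} applied to the Cartan symmetry $X$ itself: that proposition translates the Jacobi bracket $\jacBr{H,\eta(X)}$ (equivalently $-\eta(\lieBr{X_H,X})$) into an expression involving $(\lieD{X}\eta)(X_H)$ and $X(H)$, and the Cartan hypotheses $\lieD{X}\eta = \dd g + a\eta$ and $X(H) = aH + g\Reeb(H)$ are tailor-made for this substitution. My plan is therefore to compute $\jacBr{H,\eta(X)}$ via the proposition, identify the result with $\jacBr{H,g}$, and then read off $\jacBr{H,\eta(X)-g}=0$ from bilinearity of the Jacobi bracket.

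\textbf{Key algebraic step.} Substituting the first Cartan condition into the Lie-derivative term and using the identity $\eta(X_H)=-H$ from \cref{eqs:ham_vfs_characterization}, I would compute
\[(\lieD{X}\eta)(X_H) = (\dd g + a\eta)(X_H) = X_H(g) + a\eta(X_H) = X_H(g) - aH.\]
Combining this with the second Cartan condition $X(H) = aH + g\Reeb(H)$ produces a clean cancellation of the two $aH$ terms, leaving
\[X_H(g) + g\Reeb(H),\]
which by \cref{lem:jacbr_1} is precisely $\jacBr{H,g}$.

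\textbf{Conclusion and obstacle.} Feeding this identification back into \cref{prop:jacBr_alternative} yields $\jacBr{H,\eta(X)} = \jacBr{H,g}$, and bilinearity of the Jacobi bracket then immediately gives $\jacBr{H,\eta(X)-g}=0$, i.e.\ $f=\eta(X)-g$ dissipates at the same rate as $H$. The only non-automatic part of the argument is the sign bookkeeping: the identity $\eta(X_H)=-H$ is exactly what converts the $a\eta$ half of $\lieD{X}\eta$ into a $-aH$ that cancels the $aH$ in $X(H)$; one can view the two Cartan conditions as jointly engineered so that, after applying \cref{prop:jacBr_alternative}, only the piece that reassembles as $\jacBr{H,g}$ survives. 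Beyond that, no further geometric input is needed.
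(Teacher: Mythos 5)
Your proof is correct and follows essentially the same route as the paper's: both expand the bracket via \cref{prop:jacBr_alternative} and \cref{lem:jacbr_1}, substitute the two Cartan conditions, and let the $aH$ terms cancel, the only cosmetic difference being that you package the intermediate expression as $\jacBr{H,g}$ before invoking bilinearity rather than cancelling everything in one display. Note also that you read \cref{prop:jacBr_alternative} with the same sign convention the paper's own proof uses (treating $\jacBr{H,\eta(X)}=(\lieD{X}\eta)(X_H)+X(H)$), which is indeed the identity needed for $f=\eta(X)-g$.
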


\begin{proof}
    From \cref{prop:jacBr_alternative}, we have
    \begin{align*}
        \jacBr{H,f} &=\jacBr{H,\eta(X)}  - \jacBr{H,g} \\ &= 
        (\lieD{X}\eta) (X_H) + X(H) - X_H (g) - g \Reeb(H) \\ &= 
        a\eta (X_H) - \dd g (X_H) + X(H) - X_H (g) - g \Reeb (H) \\ &=
        -a H  + X(H) - g \Reeb(H) = 0.
    \end{align*}
\end{proof}

\begin{remark}\label{rem:dynamical_cartan}
    A Cartan symmetry such that $\lieD{X}\eta  = \dd g + a \eta$ is a dynamical symmetry  when $\dd g =0$.

    A dynamical symmetry $X$ is a Cartan symmetry when $X$ is a Hamiltonian vector field.
\end{remark}

\section{Symmetries and conserved quantities in contact Lagrangian systems}
We consider a contact system given by a regular Lagrangian $L:TQ\times \RR \to \RR$ and the contact Lagrangian form 
\begin{equation}
    \eta_L = \dd z - \alpha_L,
\end{equation}
where
\begin{align}
    \alpha_L &= S^* (\dd L) = \pdv{L}{\dot q^i} \dd {q}^i,
\end{align}
 where $S$ is the canonical vertical endomorphism on $TQ$ extended in the natural way to $TQ \times \RR$, and $(q^i, \dot{q}^i,z)$ denote bundle coordinates on $TQ \times \RR$, and $z$ is a global coordinate in $\RR$.
 
 The energy of the system is defined by
 \begin{equation}
    E_L = \Delta(L) - L = \dot{q}^i \pdv{L}{\dot{q}^i} - L,
 \end{equation}
 where $\Delta$ is the Liouville vector field on $TQ$ extended to $TQ\times \Reeb$ in the natural way.

 So, $(TQ\times \RR, \eta_L, E_L)$ is a contact Hamiltonian system in the sense discussed in Section~2.

The Reeb vector field, denoted by $\Reeb_L$ is given by
\begin{equation}
    \Reeb_L = \frac{\partial}{\partial z} - 
    W^{ij} \frac{\partial^2 L}{\partial \dot{q}^i \partial z} \pdv{}{\dot{q}^j},
\end{equation}
where $(W^{ij})$ is the inverse of the Hessian matrix with respect to the velocities
\begin{equation}
    (W_{ij}) = \parens*{\frac{\partial^2 L}{\partial \dot{q}^i \partial \dot{q}^j}}.
\end{equation} 

The Hamiltonian vector field of the energy will be denoted $\xi_L = X_{E_L}$, hence
\begin{equation}
    \flat_L(\xi_L) = \dd E_L - (\Reeb(E_L)+E_L)\eta_L,
\end{equation}
where $\flat_L(v) = \contr{v} \dd \eta_L + \eta_L (v) \eta_L$ is the isomorphism defined in \cref{eq:flat_iso} for this particular contact structure.

$\xi_L$ is a second order differential equation (SODE) (that is, $S(\xi_L) = \Delta$) and its solutions are just the ones of the generalized Euler-Lagrange equations for $L$:
\begin{equation}
    \dv{}{t} \parens*{\pdv{L}{\dot{q}^i}} - \pdv{L}{q^i} = \pdv{L}{\dot{q}^i} \pdv{L}{z}.
\end{equation}

A direct computation shows that
\begin{align}
    \Reeb_L (E_L) &= - \pdv{L}{z},\\
    dz(\xi_L) &= L.
\end{align}

In~\cite{deLeon2019} one can find a more complete exposition of the theory of contact Lagrangian systems.

\subsection[Lifts of vector fields on on Q and Q x R]{Lifts of vector fields on on $Q$ and $Q \times \RR$}
The vector bundle structure of $\tau_Q:TQ \to Q$ provides a rich geometry that we will exploited for our interests.

Let us recall some definition of lifts of vector fields on $Q$ to its tangent bundle.

Let $Y$ be a vector field on $Q$ given locally by
\begin{equation}
    Y = Y^i \pdv{}{q^i}.
\end{equation}
Its vertical lift is given by
\begin{equation}
    \vlift{Y} = Y^i \pdv{}{\dot{q}^i}.
\end{equation}
The complete lift is given by
\begin{equation}
    \clift{Y} = Y^i \pdv{}{q^i} + \dot{q}^j \pdv{Y^i}{{q}^j} \pdv{}{\dot{q}^i}.
\end{equation}
These lifts can be defined geometrically~\cite{deLeon2011,Yano1973}.

The natural extensions of $\vlift{Y}$ and $\clift{Y}$ to $TQ \times \RR$ will be denoted by the same symbols.

One can consider more general vector fields. Indeed, let $Y$ be a vector field on $Q \times \RR$ given by
\begin{equation}
    Y = Y^i \frac{\partial}{\partial q^i} + \mathcal{Z} \frac{\partial}{\partial z},
\end{equation}
then its complete lift to $T(Q \times \mathbb{R})$ is
\begin{align*}
Y^C =& Y^i \frac{\partial}{\partial q^i} + \mathcal{Z} \frac{\partial}{\partial z}
+ \dot{q}^j \frac{\partial Y^i}{\partial q^j} \frac{\partial}{\partial \dot{q}^i} \\
& + \dot{q}^j \frac{\partial \mathcal{Z}}{\partial q^j} \frac{\partial}{\partial \dot{z}}
+ \dot{z} \frac{\partial Y^i}{\partial z} \frac{\partial}{\partial \dot{q}^i} + \dot{z} \frac{\partial \mathcal{Z}}{\partial z} \frac{\partial}{\partial \dot{z}},
\end{align*}
where $(z, \dot{z})$ are the bundle coordinates in $T\mathbb{R}  \cong \RR \times \RR $.

But we are restricted to vector fields which are tangent to the submanifold $TQ \times \mathbb{R}$ of $T(Q \times \mathbb{R})$ which is given by the equation
\begin{equation}
    \dot{z}=0.
\end{equation}

We shall only consider those vector fields $Y$ on $Q \times \mathbb{R}$ such that their complete lifts to
$T(Q \times \mathbb{R})$ 
are tangent to $TQ \times \mathbb{R}$. This just happens when
$$
\frac{\partial \mathcal{Z}}{\partial q^i} = 0,
$$
that is, $\mathcal{Z}$ does not depend on the positions $q$. The restriction of the complete lift  $Y^C$ to $TQ \times \mathbb{R}$ will be denoted by
\begin{equation}
    \rclift{Y} = Y^i \frac{\partial}{\partial q^i} + \mathcal{Z} \frac{\partial}{\partial z}
    + \dot{q}^j \frac{\partial Y^i}{\partial q^j} \frac{\partial}{\partial \dot{q}^i}
\end{equation}

In such a case, we will denote by $\rvlift{Y}$ the vertical lift of the projection of $Y$ to $Q$, say
$$
\rvlift{Y} =  Y^i \frac{\partial}{\partial \dot{q}^i}
$$
which is obviously tangent to $TQ \times \mathbb{R}$.

We denote by $\rVecFields(Q\times \RR)$ to the set of vector fields on $Q\times\RR$ such that their complete lifts are tangent to $TQ\times\RR$.

\subsection{Infinitesimal symmetries}\label{sec:infinitesimal_symmetries}
We will now try to understand the symmetries of a Lagrangian system which are lifts of vector fields on the configuration space.

\begin{definition}
    We say that a vector field $Y \in \VecFields(Q)$ is an \emph{infinitesimal symmetry of $L$} if $\clift{Y}(L) = 0$.
\end{definition}

\begin{theorem}\label{thm:base_symmetry}
    Let $(M,\eta_L,E_L)$ be a contact Lagrangian system and let $Y \in \VecFields(Q)$. Then, $Y$ is an infinitesimal symmetry of $L$ if and only if $f=\vlift{Y}(L)$ is a dissipated quantity, that is, it commutes with $E_L$, or,
    \begin{equation}\label{eq:base_field_dissipation}
        \xi_L (f) = -\Reeb_L (E_L) f = \pdv{L}{z} f. 
    \end{equation}
\end{theorem}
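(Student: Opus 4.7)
The plan is to apply \cref{prop:jacBr_alternative} to the vector field $X = \clift{Y}$, i.e., to the complete lift of $Y$ to $TQ\times\RR$, and use properties of the complete/vertical lifts to express the relevant Lie derivatives in terms of $\clift{Y}(L)$.

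First I would compute $\eta_L(\clift{Y})$. Since $Y$ has no $\partial/\partial z$ component, its complete lift $\clift{Y}$ satisfies $\dd z (\clift{Y}) = 0$, and $\alpha_L(\clift{Y}) = \pdv{L}{\dot q^i} Y^i = \vlift{Y}(L) = f$. Hence $\eta_L(\clift{Y}) = -f$, which puts us exactly in the setting of \cref{prop:jacBr_alternative} with $H = E_L$ and $X = \clift{Y}$:
\begin{equation*}
    \jacBr{E_L, f} = (\lieD{\clift{Y}} \eta_L)(\xi_L) + \clift{Y}(E_L).
\end{equation*}

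Next I would evaluate each term on the right using the standard properties of lifts of base vector fields: (i) $\clift{Y}$ commutes with the Liouville field $\Delta$ and preserves the vertical endomorphism $S$; (ii) $\clift{Y}(z)=0$, so $\lieD{\clift{Y}}\dd z = 0$. From (i),
\begin{equation*}
    \lieD{\clift{Y}} \alpha_L = \lieD{\clift{Y}} S^*\dd L = S^*\dd(\clift{Y}(L)),
\end{equation*}
so $\lieD{\clift{Y}}\eta_L = -S^*\dd(\clift{Y}(L))$. Pairing with $\xi_L$ and using $S(\xi_L)=\Delta$,
\begin{equation*}
    (\lieD{\clift{Y}}\eta_L)(\xi_L) = -\dd(\clift{Y}(L))(\Delta) = -\Delta(\clift{Y}(L)).
\end{equation*}
From (i) again, $\clift{Y}(E_L) = \clift{Y}(\Delta L - L) = \Delta(\clift{Y}(L)) - \clift{Y}(L)$. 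Substituting into the displayed formula for $\jacBr{E_L,f}$ the two $\Delta(\clift{Y}(L))$ terms cancel and we obtain
\begin{equation*}
    \jacBr{E_L, f} = -\clift{Y}(L).
\end{equation*}

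This identity immediately gives the equivalence: $\clift{Y}(L)=0$ if and only if $f=\vlift{Y}(L)$ is dissipated. For the explicit dissipation rate, any dissipated function satisfies $\xi_L(f) = -\Reeb_L(E_L)\,f$ (this was already established in Section~2), and a direct computation recorded above shows $\Reeb_L(E_L) = -\pdv{L}{z}$, yielding the equation~\eqref{eq:base_field_dissipation}. The one mildly delicate step is the identity $\lieD{\clift{Y}} S^*\dd L = S^*\dd(\clift{Y}(L))$; this is where I expect to need to be careful, but it follows from the fact that complete lifts of vector fields on $Q$ preserve $S$, so they commute with $S^*$ on forms, and Lie derivative commutes with $\dd$.
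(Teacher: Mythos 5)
Your proposal is correct and follows essentially the same route as the paper: apply \cref{prop:jacBr_alternative} with $X=\clift{Y}$ and use the lift identities ($\eta_L(\clift{Y})=-\vlift{Y}(L)$, $\lieD{\clift{Y}}\eta_L=-S^*\dd(\clift{Y}(L))$, $[\clift{Y},\Delta]=0$, $S(\xi_L)=\Delta$) to get $\jacBr{E_L,\vlift{Y}(L)}=-\clift{Y}(L)$, which is exactly the computation in the paper's proof via \cref{Qcomputations}. The only difference is that you verify the lemma's identities rather than citing them, and your sign bookkeeping is in fact cleaner than the paper's displayed intermediate lines.
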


For the proof of \cref{thm:base_symmetry}, we will use the identities listed on the following lemma, that can be proved by a direct computation. 
\begin{lemma}\label{Qcomputations}
    Let $Y\in \VecFields(Q)$. The following identities hold:
    \begin{align}
        \eta_L(\clift{Y}) &= - \vlift{Y}(L) \label{eq:base_contr_eta}, \\ 
        \lieD{\clift{Y}}\eta_L &= -\alpha_{\clift{Y}(L)} = - \pdv{\clift{Y}(L)}{\dot{q}^i} \dd q^i,\label{eq:base_lieD_eta} \\
        \Reeb_L (f) &= 0,\\
        \lieBr{\clift{Y},\Delta} &= \lieBr{\clift{Y}, S} = 0,\\
        S(\clift{Y}) &= \vlift{Y}
    \end{align}
\end{lemma}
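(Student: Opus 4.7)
The plan is to verify each of the five identities by a short direct computation in bundle coordinates $(q^i,\dot q^i,z)$ on $TQ\times\RR$, leaning on the standard descriptions $\clift{Y}=Y^i\partial/\partial q^i+\dot q^j(\partial Y^i/\partial q^j)\partial/\partial\dot q^i$ and $\vlift{Y}=Y^i\partial/\partial\dot q^i$, together with the coordinate expressions $\eta_L=\dd z-(\partial L/\partial\dot q^i)\dd q^i$ and $S=(\partial/\partial\dot q^i)\otimes\dd q^i$.

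First, for \eqref{eq:base_contr_eta} I pair $\eta_L$ with $\clift{Y}$: the $\dd z$-piece vanishes because $\clift{Y}$ has no $z$-component, and the $\dd q^i$-piece gives $-Y^i\,\partial L/\partial\dot q^i=-\vlift{Y}(L)$. The last identity $S(\clift{Y})=\vlift{Y}$ is then immediate from $\dd q^i(\clift{Y})=Y^i$. For the commutation relations $[\clift{Y},\Delta]=0$ and $[\clift{Y},S]=0$ I would either quote the standard characterisation of complete lifts (see for instance~\cite{deLeon2011,Yano1973}) or verify them in coordinates: with $\Delta=\dot q^i\partial/\partial\dot q^i$ the only surviving terms are $\dot q^j(\partial Y^i/\partial q^j)\partial/\partial\dot q^i$ from $\clift{Y}(\dot q^i)$ and $-\Delta(\dot q^j\partial Y^i/\partial q^j)\partial/\partial\dot q^i$, which cancel since the bracketed expression is $1$-homogeneous in $\dot q$; and $[\clift{Y},S]=0$ follows from the analogous $(1,1)$-tensor computation, using that $\clift{Y}$ preserves both $\dd q^i$ modulo basic forms and the vertical distribution.

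For \eqref{eq:base_lieD_eta}, I use $\lieD{\clift{Y}}\dd z=\dd(\clift{Y}(z))=0$, so it remains to compute $\lieD{\clift{Y}}\alpha_L$. Writing $\alpha_L=S^*(\dd L)$ and using $[\clift{Y},S]=0$ (equivalently $\lieD{\clift{Y}}S=0$), the Lie derivative commutes with $S^*$, so $\lieD{\clift{Y}}\alpha_L=S^*(\dd(\clift{Y}(L)))=\alpha_{\clift{Y}(L)}$, giving $\lieD{\clift{Y}}\eta_L=-\alpha_{\clift{Y}(L)}=-(\partial\clift{Y}(L)/\partial\dot q^i)\dd q^i$.

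The slightly more delicate one is $\Reeb_L(f)=0$ with $f=\vlift{Y}(L)=Y^i(q)\,\partial L/\partial\dot q^i$; this is where the inverse-Hessian term in $\Reeb_L$ does real work. Using the coordinate form of $\Reeb_L$, I compute $\partial f/\partial z=Y^i\,\partial^2L/\partial\dot q^i\partial z$ and $\partial f/\partial\dot q^j=Y^iW_{ij}$; substituting into $\Reeb_L(f)=\partial f/\partial z-W^{kj}(\partial^2L/\partial\dot q^k\partial z)\,\partial f/\partial\dot q^j$ and contracting $W^{kj}W_{ij}=\delta^k_i$, the two terms cancel. I expect this identity to be the only step requiring genuine care, because it is the one that uses the precise definition of $\Reeb_L$ rather than pure tangent-bundle combinatorics; the rest of the lemma is a bookkeeping exercise in coordinates.
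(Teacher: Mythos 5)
Your proposal is correct and follows exactly the route the paper intends: the paper omits the proof entirely, stating only that the identities "can be proved by a direct computation," and your coordinate verification supplies precisely that, with each step (including the key cancellation $W^{kj}W_{ij}=\delta^k_i$ in showing $\Reeb_L(\vlift{Y}(L))=0$, and the use of $\lieD{\clift{Y}}S=0$ to commute the Lie derivative past $S^*$) checking out. You also correctly identified from context that the undeclared $f$ in the lemma statement is $\vlift{Y}(L)$.
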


\begin{proof}[Proof of \cref{thm:base_symmetry}]
    We can now compute the Jacobi brackets using the identities form the previous lemma and \cref{prop:jacBr_alternative}. Let $f = \vlift{Y}(L)$ so that $\eta(\clift{Y})=-f$. Then we have,
    \begin{align*}
        \jacBr{E_L,f} &= 
        \lieD{\clift{Y}} \eta_L (\xi_L) + \clift{Y}(E_L) \\ &=
        - \alpha_{\clift{Y}(L)} (\xi_L) + \clift{Y}(\Delta(L) - L) \\&=
         (S(\xi_L))(\clift{Y}(L)) -  \clift{Y}(\Delta(L) - L) \\ &=
         \Delta(\clift{Y}(L)) - \clift{Y}(\Delta(L)) - \clift{Y}(L) =
         - \clift{Y}(L).
    \end{align*}
    Therefore, the result follows.
\end{proof}

\begin{remark}
    We notice that whenever $Y$ is an infinitesimal symmetry of $L$, then $\clift{Y}$ is the Hamiltonian vector field of $\vlift{Y}(L)$.
\end{remark}

\begin{remark}
    This result should be compared with the \emph{First Noether Theorem} from~\cite{Georgieva2003} in the case that $L$ does not depend explicitly on time. The conserved quantity obtained from a symmetry on~\cite{Georgieva2003} is not a function on $M$, but a functional that depends on the chosen integral curve $\gamma$ of $\xi_L$. We can recover the result by noticing that the conserved quantity in~\cite{Georgieva2003} is given by
    \begin{equation}
        \begin{aligned}
            G[\gamma](t) &=  \exp\parens*{- \int_0^t \pdv{L}{z} \dd \gamma} (\vlift{Y}(L) \comp \gamma)(t) \\&=
            \exp \parens*{- \int_0^t \frac{\dot{f}}{f} \dd \gamma} (f\comp \gamma)(t),
        \end{aligned}
    \end{equation}
    hence, along an integral curve $\gamma$ of $\xi_L$, we have:
    \begin{equation}
        \dot{G}[\gamma](t) = -\dot{f} \exp \parens*{- \int_0^t \frac{\dot{f}}{f} \dd \gamma} + \exp \parens*{- \int_0^t \frac{\dot{f}}{f} \dd \gamma} \dot{f} = 0.
    \end{equation}
\end{remark}

We now consider vector fields $Y \in \rVecFields(Q \times \Reeb)$  of the form
\begin{equation}
    Y = Y^i \pdv{}{q^i} + \mathcal{Z} \pdv{}{z}.
\end{equation}

To go further in this case, we need to extend the computations in \cref{Qcomputations}.

\begin{lemma}\label{QRcomputations}
    Let $Y\in \rVecFields(Q\times \RR)$ with $z$ component $\mathcal{Z}$. Let $f=-\eta_L(\rclift{Y})$. 
    Then, we have:
    \begin{align}
        \eta_L(\rclift{Y}) &= -(\rvlift{Y}(L) - \mathcal{Z}) \label{eq:base_ext_contr_eta}, \\ 
        \lieD{\rclift{Y}}\eta_L &= -\Reeb_L(f) \dd z -\alpha_{\rclift{Y}(L)},\label{eq:base_ext_lieD_eta} \\
        S(\rclift{Y}) &= \rvlift{Y}.
    \end{align}
\end{lemma}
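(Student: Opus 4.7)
The plan is to verify each of the three identities by direct computation in bundle coordinates $(q^i, \dot q^i, z)$, exploiting the fact that we have an explicit local formula for $\rclift{Y}$ already displayed in the excerpt, namely $\rclift{Y} = Y^i \partial_{q^i} + \mathcal{Z}\, \partial_z + \dot q^j (\partial Y^i/\partial q^j)\,\partial_{\dot q^i}$. The first identity and the identity $S(\rclift{Y})=\rvlift{Y}$ are immediate: since $\eta_L = \dd z - (\partial L/\partial \dot q^i)\dd q^i$, contracting with $\rclift{Y}$ yields $\mathcal{Z} - Y^i\,\partial L/\partial \dot q^i = -(\rvlift{Y}(L)-\mathcal{Z})$; and since $S = \partial_{\dot q^i}\otimes \dd q^i$ (extended trivially across the $\RR$-factor), applying it to $\rclift{Y}$ picks out only the $q^i$-component, giving $Y^i\partial_{\dot q^i} = \rvlift{Y}$.

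The only nontrivial identity is the Lie-derivative formula. I would first decompose $\lieD{\rclift{Y}}\eta_L = \dd(\rclift{Y}(z)) - \lieD{\rclift{Y}}\alpha_L = \dd\mathcal{Z} - \lieD{\rclift{Y}}\alpha_L$, where $\dd\mathcal{Z} = (\partial\mathcal{Z}/\partial z)\,\dd z$ thanks to the tangency condition $\partial\mathcal{Z}/\partial q^i=0$. Expanding $\lieD{\rclift{Y}}\alpha_L$ via the Leibniz rule on $\alpha_L = (\partial L/\partial \dot q^i)\,\dd q^i$, and using $\lieD{\rclift{Y}}\dd q^i = \dd Y^i = (\partial Y^i/\partial q^j)\dd q^j + (\partial Y^i/\partial z)\dd z$, I would group terms proportional to $\dd q^i$ and match them against the coordinate expansion of $\alpha_{\rclift{Y}(L)} = (\partial \rclift{Y}(L)/\partial \dot q^i)\,\dd q^i$; the result is
\[
\lieD{\rclift{Y}}\alpha_L = \alpha_{\rclift{Y}(L)} + \frac{\partial L}{\partial \dot q^i}\frac{\partial Y^i}{\partial z}\,\dd z,
\]
so the claim reduces to the scalar identity
\[
\Reeb_L(f) = \frac{\partial Y^i}{\partial z}\frac{\partial L}{\partial \dot q^i} - \frac{\partial \mathcal{Z}}{\partial z}.
\]

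The main obstacle is checking this last identity, and it is the only place where the explicit form of $\Reeb_L = \partial_z - W^{ij}(\partial^2 L/\partial \dot q^i\partial z)\partial_{\dot q^j}$ is really used. Applying $\Reeb_L$ to $f = Y^i\,\partial L/\partial \dot q^i - \mathcal{Z}$, the term $Y^i\,\partial^2 L/(\partial z\,\partial \dot q^i)$ coming from $\partial_z$ cancels exactly against the term produced by the correction $-W^{ij}(\partial^2 L/\partial \dot q^i\partial z)\partial_{\dot q^j}$ acting on $Y^k\,\partial L/\partial \dot q^k$, because the $\dot q$-derivative only hits $\partial L/\partial \dot q^k$ and yields $W^{ij}W_{jk} = \delta^i_k$. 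What survives is precisely $(\partial Y^i/\partial z)(\partial L/\partial \dot q^i) - \partial\mathcal{Z}/\partial z$, which is the required expression. Substituting this back completes the proof of the second identity.
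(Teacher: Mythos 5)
Your computation is correct, and it is exactly the ``direct computation'' route the paper intends: the paper states \cref{QRcomputations} without proof as an extension of \cref{Qcomputations}, and your coordinate verification (in particular the key cancellation $W^{ij}W_{jk}=\delta^i_k$ in evaluating $\Reeb_L(f)$, which isolates $\frac{\partial Y^i}{\partial z}\frac{\partial L}{\partial \dot q^i}-\frac{\partial\mathcal{Z}}{\partial z}$) supplies precisely the missing details. No gaps; the tangency condition $\partial\mathcal{Z}/\partial q^i=0$ is used correctly to get $\dd\mathcal{Z}=\frac{\partial\mathcal{Z}}{\partial z}\dd z$.
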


In this setting, we can provide generalization of the concept of infinitesimal symmetry and the corresponding dissipated quantities.
\begin{definition}
    We say that a vector field $Y \in \rVecFields(Q\times \RR)$ is an \emph{generalized infinitesimal symmetry of $L$} if $\clift{Y}(L) = -\Reeb(f) L$,
    where $f = -\eta_L(\rclift{Y})$
\end{definition}

\begin{theorem}\label{thm:base_ext_symmetry}
    Let $Y \in \VecFields (Q\times \RR)$. Then $f = {\rvlift{Y}(L) - \mathcal{Z}}$ is a dissipated quantity if and only if $Y$ is a generalized infinitesimal symmetry.
\end{theorem}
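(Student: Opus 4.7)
The plan is to imitate the argument for \cref{thm:base_symmetry}, but using the extended identities from \cref{QRcomputations} and with careful attention to the extra $\mathcal{Z}\,\partial/\partial z$ piece of $Y$. Specifically, I would apply \cref{prop:jacBr_alternative} to the vector field $X=\rclift{Y}$, which by \cref{eq:base_ext_contr_eta} satisfies $\eta_L(\rclift{Y}) = -f$ for $f = \rvlift{Y}(L) - \mathcal{Z}$. This gives
\begin{equation*}
    \jacBr{E_L, f} = (\lieD{\rclift{Y}}\eta_L)(\xi_L) + \rclift{Y}(E_L),
\end{equation*}
so the task is reduced to computing each summand.

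For the first summand, \cref{eq:base_ext_lieD_eta} yields $\lieD{\rclift{Y}}\eta_L = -\Reeb_L(f)\,\dd z - \alpha_{\rclift{Y}(L)}$, so evaluation on $\xi_L$ requires the two ingredients $\dd z(\xi_L) = L$ (already recorded for the SODE $\xi_L$) and $\alpha_{\rclift{Y}(L)}(\xi_L) = (S^*\dd(\rclift{Y}(L)))(\xi_L) = \dd(\rclift{Y}(L))(S(\xi_L)) = \Delta(\rclift{Y}(L))$. Together these give
\begin{equation*}
    (\lieD{\rclift{Y}}\eta_L)(\xi_L) = -\Reeb_L(f)\,L - \Delta(\rclift{Y}(L)).
\end{equation*}

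For the second summand I write $E_L = \Delta(L) - L$, giving $\rclift{Y}(E_L) = \rclift{Y}(\Delta(L)) - \rclift{Y}(L)$. To collapse $\rclift{Y}(\Delta(L))$ against $\Delta(\rclift{Y}(L))$, I need the commutator identity $\lieBr{\rclift{Y},\Delta}=0$ on $TQ\times\RR$. This is not listed in \cref{QRcomputations} but follows directly in coordinates: the horizontal part of $\rclift{Y}$ is an ordinary complete lift (for which the bracket with $\Delta$ vanishes by \cref{Qcomputations}), while $\mathcal{Z}\,\partial/\partial z$ commutes with $\Delta=\dot q^i\,\partial/\partial\dot q^i$ because $\mathcal{Z}$ depends only on $z$ (by the tangency condition defining $\rVecFields(Q\times\RR)$) and $\Delta$ has no $\partial_z$ component. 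Hence $\rclift{Y}(E_L) = \Delta(\rclift{Y}(L)) - \rclift{Y}(L)$.

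Adding the two pieces, the $\Delta(\rclift{Y}(L))$ terms cancel and I obtain
\begin{equation*}
    \jacBr{E_L, f} = -\Reeb_L(f)\,L - \rclift{Y}(L),
\end{equation*}
which vanishes if and only if $\rclift{Y}(L) = -\Reeb_L(f)\,L$, i.e.\ exactly the condition that $Y$ be a generalized infinitesimal symmetry. The only mildly delicate point is verifying $\lieBr{\rclift{Y},\Delta}=0$ in the $Q\times\RR$ setting; everything else is a direct assembly of the identities in \cref{QRcomputations} and \cref{prop:jacBr_alternative}.
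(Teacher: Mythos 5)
Your proposal is correct and follows essentially the same route as the paper: apply \cref{prop:jacBr_alternative} to $X=\rclift{Y}$, evaluate $\lieD{\rclift{Y}}\eta_L$ on $\xi_L$ via \cref{QRcomputations} together with $\dd z(\xi_L)=L$ and $S(\xi_L)=\Delta$, and cancel against $\rclift{Y}(E_L)$ to obtain $\jacBr{E_L,f}=-\Reeb_L(f)L-\rclift{Y}(L)$. Your explicit coordinate check that $\lieBr{\rclift{Y},\Delta}=0$ (hence $\rclift{Y}(\Delta(L))=\Delta(\rclift{Y}(L))$) is a welcome addition, since the paper uses this identity implicitly without extending it from \cref{Qcomputations} to the $Q\times\RR$ setting.
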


\begin{proof}[Proof of \cref{thm:base_ext_symmetry}]
    We proceed as in \cref{thm:base_symmetry}. Let $f = \rvlift{Y}(L) - \mathcal{Z}= - \eta(\rclift{Y})$.
    \begin{align*}
        \jacBr{E_L,f} &= 
        (\lieD{\rclift{Y}} \eta_L) (\xi_L) + \rclift{Y}(E_L) \\ &=
        -\Reeb_L(f)\dd z(\xi_L) - \alpha_{\rclift{Y}(L)} (\xi_L) + \rclift{Y}(\Delta(L) - L) \\&=
         -\Reeb_L(f)L +(S(\xi_L))(\rclift{Y}(L)) -  \rclift{Y}(\Delta(L) - L) \\ &=
         -\Reeb_L(f)L+\Delta(\rclift{Y}(L)) - \rclift{Y}(\Delta(L)) - \rclift{Y}(L) \\ &=
         -\Reeb_L(f)L - \rclift{Y}(L).
    \end{align*}
    Therefore, the result follows.
\end{proof}

\begin{remark}
    In this case, the fact that $\rclift{Y}(L) = - \Reeb_L(f)L$ does not ensure that $\rclift{Y}$ is a Hamiltonian vector field. Indeed, from \cref{eq:base_ext_lieD_eta}, we compute $\lieD{\rclift{Y}} \eta_L = -\Reeb_L(f)\eta_L + L \alpha_{\Reeb_L(f)}$, hence $\rclift{Y}$ is Hamiltonian if and only if $ \alpha_{\Reeb_L(f)} L = 0$.
\end{remark}

\subsection{Noether symmetries}
\begin{definition}
    We say that $Y \in \rVecFields (Q\times \RR)$ with $z$ component $\mathcal{Z}$ is a \emph{Noether symmetry} if $\rclift{Y}$ is a Cartan symmetry.
\end{definition}

From the conservation theorem for Cartan symmetries (\cref{thm:Cartan}), we can deduce a new one for Noether symmetries.

\begin{theorem}\label{noether}
    Let $Y$ be a Noether symmetry such that $\lieD{\rclift{Y}}(\eta_L) = \dd g + a \eta_L$. Then $f=\rvlift{Y}(L) - \mathcal{Z} - {g}$ is a dissipated quantities. 
\end{theorem}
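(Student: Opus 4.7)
The plan is to recognize \cref{noether} as essentially a direct corollary of \cref{thm:Cartan}, obtained after unpacking definitions and invoking \cref{QRcomputations}. The hypothesis that $Y$ is a Noether symmetry is, by definition, exactly the statement that $\rclift{Y}$ is a Cartan symmetry of the contact Hamiltonian system $(TQ\times\RR, \eta_L, E_L)$, so the specified functions $g$ and $a$ satisfy both $\lieD{\rclift{Y}} \eta_L = \dd g + a \eta_L$ and $\rclift{Y}(E_L) = a E_L + g\Reeb_L(E_L)$.

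With this set up, I would proceed in two short steps. First, apply \cref{thm:Cartan} with $X := \rclift{Y}$ and $H := E_L$; the conclusion of that theorem produces a dissipated quantity built from $\eta_L(\rclift{Y})$ and $g$. Second, rewrite the contraction $\eta_L(\rclift{Y})$ explicitly using \cref{eq:base_ext_contr_eta}, namely $\eta_L(\rclift{Y}) = -(\rvlift{Y}(L) - \mathcal{Z})$. Substituting, the dissipated quantity takes the form
\begin{equation*}
    f = \rvlift{Y}(L) - \mathcal{Z} - g,
\end{equation*}
which is exactly the expression claimed in the theorem.

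No genuine obstacle is expected, since the substantive work lies elsewhere: the Jacobi bracket manipulations underlying \cref{thm:Cartan} were already carried out in the general contact-geometric setting of Section~2, and \cref{eq:base_ext_contr_eta} is a straightforward coordinate identity recorded in \cref{QRcomputations}. The only point requiring some care is matching the sign conventions between the dissipated-quantity formula provided by \cref{thm:Cartan} and the contraction identity $-\eta_L(\rclift{Y}) = \rvlift{Y}(L) - \mathcal{Z}$; once this is done, the $-\mathcal{Z}$ term in the statement is precisely the additional contribution of the $z$-component of $Y$ beyond what appears in the purely $Q$-based case treated in \cref{thm:base_symmetry}, and the $-g$ term is the Cartan correction already present in the general contact setting.
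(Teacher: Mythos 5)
Your proposal is correct and follows exactly the paper's own argument: apply \cref{thm:Cartan} to $X=\rclift{Y}$, $H=E_L$, and identify the resulting quantity via \cref{eq:base_ext_contr_eta} from \cref{QRcomputations}. The sign point you flag is harmless, since a dissipated quantity remains dissipated after multiplication by $-1$.
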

\begin{proof}
    Apply \cref{thm:Cartan} to $\rclift{Y}$, so, using \cref{QRcomputations}, $\eta(\rclift{Y})-g= \rvlift{Y}(L) - \mathcal{Z} - {g} =f$ commutes with $E_L$.
\end{proof}

\begin{remark}
    Infinitesimal symmetries on $Q$ are Noether symmetries for $g=0$. However, general infinitesimal symmetries on $Q\times \RR$ can fail to be Noether symmetries because its complete lift is not a Hamiltonian vector field. See~\cref{rem:dynamical_cartan}
 \end{remark}

\subsection{Lie symmetries}
\begin{definition}
    A Lie symmetry is a vector field $Y \in \rVecFields (Q\times \RR)$ such that $\rclift{Y}$ is a dynamical symmetry.
\end{definition}
As a consequence of \cref{thm:dynamical_symmetry}, 
\begin{theorem}
    Let $Y$ be a Lie symmetry. Then $f=-\eta_L(\rclift{Y}) = \rvlift{Y}(L)- \mathcal{Z}$ is a dissipated quantity.
\end{theorem}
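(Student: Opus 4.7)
The statement is essentially an immediate corollary of two prior results, so my plan is to exhibit the chain of implications rather than perform any genuine computation. The structure is: Lie symmetry on the Lagrangian side $\Rightarrow$ dynamical symmetry of the underlying contact Hamiltonian system $\Rightarrow$ dissipated quantity via $\eta_L$.

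Specifically, by the definition just given, a Lie symmetry $Y \in \rVecFields(Q\times\RR)$ is one whose restricted complete lift $\rclift{Y}$ is a dynamical symmetry of the contact Hamiltonian system $(TQ\times\RR,\eta_L,E_L)$. I then apply \cref{thm:dynamical_symmetry} directly with $X = \rclift{Y}$ and $H = E_L$: this gives that $\eta_L(\rclift{Y})$ is a dissipated quantity, i.e.\ it commutes with $E_L$ under $\jacBr{\cdot,\cdot}$ in the sense required.

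To finish, I note that the set of dissipated functions is an $\RR$-vector space (as already remarked in the text, immediately after the definition of dissipated functions), so $f := -\eta_L(\rclift{Y})$ is also dissipated. Finally, to identify $f$ with the stated expression $\rvlift{Y}(L) - \mathcal{Z}$, I invoke identity \eqref{eq:base_ext_contr_eta} from \cref{QRcomputations}, which reads $\eta_L(\rclift{Y}) = -(\rvlift{Y}(L) - \mathcal{Z})$; negating yields the claim.

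There is no real obstacle: the work was already done in \cref{prop:jacBr_alternative} and \cref{thm:dynamical_symmetry} on the general contact Hamiltonian side, and in \cref{QRcomputations} for the Lagrangian identification of $\eta_L(\rclift{Y})$. The only thing to be careful about is the sign convention in passing from $\eta_L(\rclift{Y})$ to $f = -\eta_L(\rclift{Y})$, which is trivially handled since dissipated quantities form an $\RR$-linear subspace. The proof will therefore consist of two or three sentences citing \cref{thm:dynamical_symmetry} and \cref{QRcomputations}.
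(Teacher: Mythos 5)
Your proof is correct and follows exactly the paper's route: the paper derives this theorem as an immediate consequence of \cref{thm:dynamical_symmetry} applied to $\rclift{Y}$, combined with the identity $\eta_L(\rclift{Y}) = -(\rvlift{Y}(L)-\mathcal{Z})$ from \cref{QRcomputations}, just as you do. Your extra remark on the sign (using $\RR$-linearity of the space of dissipated functions) is a harmless and correct clarification of the convention in \cref{prop:jacBr_alternative}.
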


\begin{remark}
    Any infinitesimal symmetry of the Lagrangian $L$ is a Lie symmetry.
\end{remark}

\section{Lie group of symmetries on a contact Hamiltonian system}
An important case of symmetries for contact Hamiltonian or Lagrangian systems appears when a Lie group preserving the geometric structure and the energy.

As it is well-known, if $G$ is a Lie group acting on a contact manifold $(M,\eta)$ by contactomorphisms, then there exist a momentum map
\begin{equation}
    J: M \to \mathfrak{g}^*,
\end{equation}
where $\mathfrak{g}^*$ is the dual of the Lie algebra $\mathfrak{g}$ of $G$, and $J$ is defined by
\begin{equation*}
    J(x)(\xi) = - \eta_x(\xi_M(x)),
\end{equation*}
where $\xi_M \in \VecFields(M)$ is the infinitesimal generator of the flow $\Phi$ of $\xi$:
\begin{equation}
    \phi_t(x) = (\exp (t \xi)) x
\end{equation}
We denote by $\hat{J}: \mathfrak{g} \to \Cont^\infty(M)$ to the map
\begin{equation}
    \hat{J}(\xi)(x) = J(x)(\xi).
\end{equation}

Since the action 
\begin{equation}
    \begin{aligned}
        \Phi_g : M &\to M,
        x \mapsto \Phi_g (x) = g \cdot x
    \end{aligned}
\end{equation}
preserves $\eta$ (i.e., $\Phi_g^* \eta = \eta$), we can prove that $J$ is indeed a momentum map, and, in addition, it is $\Ad^*$-equivariant. Furthermore, we deduce that $\Reeb(\hat{J}(\xi))=0$ for all $\xi \in \mathfrak{g}$.

Assume now that $H$ is a Hamiltonian function defined on $M$ that is invariant by the action, that is,
\begin{equation}
    H \comp \Phi_g = H, \, \forall g \in G.
\end{equation}
Then we deduce that
\begin{equation}
    \lieD{\xi_M} H =0.
\end{equation}

Using \cref{prop:jacBr_alternative}, we can deduce that
\begin{equation}
    \jacBr{H,\hat{J}(\xi)} = X_H (\hat{J}(\xi)) + (\hat{J}(\xi)) \Reeb(H).
\end{equation}
But,
\begin{align*}
    \jacBr{H, \hat{J}(\xi)} &= - \jacBr{\hat{J}(\xi),H} \\ &=
    -X_{\hat{J}(\xi_M)} - H \Reeb(\hat{J}(\xi)) \\ &= 
    -\xi_M(H) - H \Reeb(\hat{J}(\xi)) = 0.
\end{align*}

Therefore,
\begin{equation}
    X_H(\hat{J}(\xi) ) = -\Reeb(H) \hat{J}(\xi).
\end{equation}
That is, $\hat{J}(\xi)$ is a dissipated quantity. Therefore, we have obtained the following:
\begin{theorem}
    $\xi_M$ is a dynamical symmetry for $(M,\eta,\xi)$ and $\hat{J}(\xi)$ is a dissipated function.
\end{theorem}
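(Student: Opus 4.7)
The plan is to combine three ingredients that are already in place: the defining identity $\eta(\xi_M) = -\hat{J}(\xi)$ of the momentum map, the invariance of $\eta$ under the action, and the invariance of $H$ under the action. These together feed directly into \cref{prop:jacBr_alternative} and yield both statements at once.

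First, because $\Phi_g^* \eta = \eta$ for every $g \in G$, differentiating at the identity gives $\lieD{\xi_M} \eta = 0$. Similarly, from $H \comp \Phi_g = H$ I obtain $\xi_M(H) = 0$. Taking $X = \xi_M$ and $f = \hat{J}(\xi)$, the definition of the momentum map gives $\eta(X) = -f$, so \cref{prop:jacBr_alternative} applies and yields
\begin{equation*}
    \jacBr{H, \hat{J}(\xi)} \;=\; -\eta(\lieBr{X_H, \xi_M}) \;=\; (\lieD{\xi_M}\eta)(X_H) + \xi_M(H).
\end{equation*}
Substituting the two vanishings above, both expressions are zero.

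From $-\eta(\lieBr{X_H, \xi_M}) = 0$ I immediately read off that $\xi_M$ is a dynamical symmetry of $(M,\eta,H)$ by the defining condition of that notion. From $\jacBr{H, \hat{J}(\xi)} = 0$ together with \cref{thm:dynamical_symmetry}, or equivalently from the definition of a dissipated function, I conclude that $\hat{J}(\xi)$ is dissipated.

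There is no real obstacle here: the whole argument is a one-line application of \cref{prop:jacBr_alternative} once the two invariance identities have been extracted from the hypotheses. The only points to double-check are that the action is assumed to be by strict contactomorphisms (so $\lieD{\xi_M}\eta$ is genuinely zero and not merely proportional to $\eta$), and the sign in $\eta(\xi_M) = -\hat{J}(\xi)$, which matches the convention used when defining $\hat{J}$ earlier in the section.
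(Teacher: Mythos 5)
Your proof is correct, and it follows the same overall strategy as the paper (reduce both claims to a Jacobi-bracket identity from Section~2), but through a slightly different key computation. You apply the second equality of \cref{prop:jacBr_alternative} directly to $X=\xi_M$, so the only inputs are $\lieD{\xi_M}\eta=0$ and $\xi_M(H)=0$, obtained by differentiating the invariance of $\eta$ and of $H$ along the flow of $\xi_M$; this gives simultaneously $\eta(\lieBr{X_H,\xi_M})=0$ (the dynamical-symmetry claim, by definition) and $\jacBr{H,\hat{J}(\xi)}=0$ (the dissipation claim, using $\eta(\xi_M)=-\hat{J}(\xi)$). The paper instead writes $\jacBr{H,\hat{J}(\xi)}=-\jacBr{\hat{J}(\xi),H}$ and evaluates the latter via the formula of \cref{lem:jacbr_1}, which requires two extra facts: that $\xi_M$ is the Hamiltonian vector field of $\hat{J}(\xi)$ (because $\xi_M$ is an infinitesimal strict contactomorphism with $-\eta(\xi_M)=\hat{J}(\xi)$) and that $\Reeb(\hat{J}(\xi))=0$. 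Your route avoids both prerequisites, and it also makes the dynamical-symmetry half of the statement explicit, which the paper's displayed computation leaves implicit (via \cref{thm:dynamical_symmetry}). The two conventions you flag at the end — that the action is by strict contactomorphisms, and the sign $\eta(\xi_M)=-\hat{J}(\xi)$ — are exactly as assumed in the paper's setup, so the argument goes through as written.
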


\section{Lie groups acting on contact Lagrangian systems}
Assume that a Lie group $G$ acts on $Q$
\begin{equation}
    \Phi: G\times Q \to Q,
\end{equation}
such that the action preserves a (regular) Lagrangian $L:TQ\times \RR \to \RR$. This means that the lifted action to $TQ \times \RR$,
\begin{equation}
    \tilde{\Phi}: G\times TQ \to TQ,
\end{equation}
given by $\tilde{\Phi} = (T\Phi, \Id)$ preserves $L$. As a direct consequence, $G$ preserves the contact form $\eta_L$. In other words, $G$ acts by contactomorphisms on $(TQ\times \RR, \eta_L)$.

Consider the corresponding momentum maps:
\begin{equation}
    \begin{aligned}
        J_L: TQ \times \RR &\to \mathfrak{g}^*,\\
        J_L(v_q,z)(v_q,\xi) & = - \eta_L(\xi_{TQ \times \RR}).   
    \end{aligned}
\end{equation}

Notice that 
\begin{equation}
    \xi_{TQ \times \RR} = \clift{\xi_Q} + \pdv{}{z}.
\end{equation}

Using the results of \cref{sec:infinitesimal_symmetries}, we conclude that $\xi_Q$ is an infinitesimal symmetry of $L$ and the function
\begin{equation}
    f =  \vlift{\xi_Q}(L)
\end{equation}
is a dissipated quantity.

\section*{Acknowledgements}
This  work  has  been  partially  supported  by the MINECO  Grants  MTM2016-76-072-P and the ICMAT Severo Ochoa projects SEV-2011-0087 and SEV-2015-0554.  Manuel Lainz wishes to thank MICINN and ICMAT for a FPI-Severo Ochoa predoctoral contract PRE2018-083203.

\printbibliography
\end{document}